\documentclass[12pt]{article}

\newcommand{\bbsm}{ \begin{smallmatrix}}
	\newcommand{\besm}{\end{smallmatrix}}

\usepackage[hmargin=1.3in,vmargin=1.3in]{geometry}
\usepackage[hmargin=1.3in,vmargin=1.3in]{geometry}
\usepackage{bbding}
\usepackage{mathrsfs}
\usepackage{cite}
\usepackage{amsfonts}
\usepackage{multirow}
\usepackage{amsfonts,amssymb,amsmath,amsthm,bm}
\usepackage[colorlinks,
            linkcolor=blue,
            anchorcolor=blue,
            citecolor=blue]{hyperref}


\newtheorem{theorem}{Theorem}
\newtheorem{example}{Example}
\newtheorem{definition}{Definition}
\newtheorem{remark}{Remark}
\newtheorem{corollary}{Corollary}
\newtheorem{lemma}{Lemma}

\begin{document}

\title{A Note on Self-Dual Generalized Reed-Solomon Codes}
\author{\small Weijun Fang $^{1,2}$ \ Jun Zhang $^{3,}$\footnote{Corresponding Author} \ Shu-Tao Xia$^{1,2}$ \ Fang-Wei Fu$^{4,5}$ \\
\scriptsize $^1$ Tsinghua Shenzhen International Graduate School, Tsinghua University, Shenzhen 518055,  China \\
\scriptsize $^2$ PCL Research Center of Networks and Communications, Peng Cheng Laboratory, Shenzhen 518055,  China\\
\scriptsize $^3$ School of Mathematical Sciences, Capital Normal University, Beijing 100048, China\\
\scriptsize $^4$  Chern Institute of Mathematics and LPMC, Nankai University, Tianjin 300071,  China\\
\scriptsize $^5$ Tianjin Key Laboratory of Network and Data Security Technology, Nankai University, Tianjin 300071,  China\\
\scriptsize E-mail: nankaifwj@163.com, junz@cnu.edu.cn, xiast@sz.tsinghua.edu.cn, fwfu@nankai.edu.cn\\
}
\date{}
\maketitle
\thispagestyle{empty}
\begin{abstract}
A linear code is called an MDS self-dual code if it is both an MDS code and a self-dual code with respect to the Euclidean inner product. The parameters of such codes are completely determined by the code length. In this paper, we consider new constructions of MDS self-dual codes via generalized Reed-Solomon (GRS) codes and their extended codes.  The critical idea of our constructions is to choose suitable evaluation points such that the corresponding (extended) GRS codes are self-dual. The evaluation set of our constructions is consists of a subgroup of finite fields and its cosets in a bigger subgroup. Four new families of MDS self-dual codes are obtained and they have better parameters than previous results in certain region. Moreover, by the M{\"o}bius action over finite fields, we give a systematic way to construct self-dual GRS codes with different evaluation points provided any known self-dual GRS codes. Specially, we prove that all the self-dual extended GRS codes over $\mathbb{F}_{q}$ with length $n< q+1$ can be constructed from GRS codes with the same parameters.

\end{abstract}

\small\textbf{Keywords:} MDS codes, Self-dual codes, Generalized Reed-Solomon codes, Mobius action

\maketitle

\section{Introduction}

An $[n, k, d]$-linear code $C$ over $\mathbb{F}_{q}$ is just a subspace of $\mathbb{F}_{q}^{n}$ with dimension $k$ and minimum Hamming distance $d$, where $d$ is given as
\[d \triangleq \min_{\bm c =(c_{1}, \dots, c_{n}) \neq 0 \in C}|\{1 \leq i \leq n: c_{i} \neq 0 \}|.\]
The code rate defined by $k/n$ and the minimum Hamming distance $d$ are two important metrics of the code. One of the trade-off between these parameters is the well-known Singleton bound:
\[d \leq n-k+1.\]
An $[n, k, d]$-linear code $C$ achieving the Singleton bound is called a \emph{maximum distance separable} (MDS) code. MDS codes have various interesting properties and wide applications both in theoretical and practice.  On the one hand, MDS codes are closely related to many other mathematical aspects, such as orthogonal arrays in combinatorial design and $n$-arcs in finite geometry (cf. \cite[Chap. 11]{MS77}). On the other hand, MDS codes also have widespread applications in data storage, such as CD ROMs and coding for distributed storage systems (see \cite{overview}). In particular, the most commonly used class of MDS codes is the well-known Reed-Solomon codes, which not only have nice theoretic properties, but also have been extensively applied in engineering due to their easy encoding and efficient decoding algorithm.

The Euclidean inner product $\langle\cdot,\cdot\rangle$ in $\mathbb{F}_{q}^{n}$ is given as
\[\langle \textbf{u}, \textbf{v}\rangle = \sum_{i=1}^{n} u_i v_i,\]
where $\textbf{u}= (u_{1}, \ldots, u_{n}) \in \mathbb{F}_{q}^{n}$ and $\textbf{v}= (v_{1}, \ldots, v_{n}) \in \mathbb{F}_{q}^{n}$.
 For any $[n, k, d]_{q}$-linear code $C$, we define the dual code $C^{\perp}$ of $C$ as
\[C^{\perp} := \{\textbf{u} \in \mathbb{F}_{q}^{n}: \langle \textbf{u}, \textbf{c}\rangle=0, \textnormal{ for any } \textbf{c} \in C \}.\]
One of important connections between $C$ and its dual $C^{\perp}$ is the well-known MacWilliams identity (cf.\cite{MS77}). We say $C$ is a \emph{self-dual} code if $C^{\perp}=C$. Self-dual codes is a very interesting and important class of linear codes. The length of a self-dual code is obviously even and the dimension is equal to half of length. The existence result of self-dual codes was given by Pless in \cite{P68}, which showed that a $q$-ary self-dual code of even length $n$ exists if and only if $(-1)^{\frac{n}{2}}$ is a square element in $\mathbb{F}_{q}$. Self-dual codes have also some applications in other aspects, such as linear secret sharing schemes (see \cite{C08,DMS08}) and unimodular integer lattices (see \cite{HP03, CS99}). In past decades, it attracts lots of attentions for investigating the codes which are both MDS and self-dual. Such codes are called MDS self-dual codes. Note that the dimension and minimum distance of an MDS self-dual code over $\mathbb{F}_{q}$ are equal to  $\frac{n}{2}$ and  $\frac{n}{2}+1$, respectively. Therefore, a natural question is that for which even $n$ an MDS self-dual code over $\mathbb{F}_{q}$ of length $n$ exists. According to the well-known MDS conjecture, it is expected to determine the existence of $q$-ary MDS self-dual codes of length $n$ for all possible $n \leq q+1$.

Recent years, several progress have been made in this topic (see \cite{GK02,BGGHK03,BBDW04,KL04,HL06,GKL08,YC15}). For the case where $q$ is even, it was shown by Grassl and Gulliver in \cite{GG08} that there is an  MDS self-dual code of even length $n$ for all $n \leq q$. By using the properties of cyclic and constacyclic codes, some new MDS self-dual codes were obtained in \cite{G12} and \cite{TW17}. A systematic construction of MDS self-dual codes was first proposed by Jin and Xing in \cite{JX17} via generalized Reed-Solomon (GRS) codes over finite fields. A subset of elements of $\mathbb{F}_{q}$, called evaluation set, with some special properties can be used to construct desired GRS codes which are self-dual. Since then, GRS codes becomes one of the most commonly used tools to construct MDS self-dual codes. In \cite{Y18} and \cite{FF19}, the authors generalized this method to extended GRS codes with general length. The evaluation set with special structures, such as a multiplicative subgroup of $\mathbb{F}_{q}^{*}$, a subspace of $\mathbb{F}_{q}$ and their cosets, are considered for constructing self-dual GRS codes. Zhang and Feng \cite{ZF19,ZF19-2} presented a unified approach to MDS self-dual codes and obtain some new codes via cyclotomy. Fang \emph{et al}. \cite{FLLL19,LLL19,FLL19} constructed new families of self-dual GRS codes via two disjoint multiplicative subgroups of $\mathbb{F}_{q}^{*}$ and their cosets. In \cite{S19}, by using the technique of algebraic geometry codes, Sok constructed several new families of MDS self-dual codes.  In the following Table 1, we summary some known results about the construction of MDS self-dual codes.

\newcommand{\tabincell}[2]{\begin{tabular}{@{}#1@{}}#2\end{tabular}}
\begin{table}[htbp]
\footnotesize
\centering
\caption{Some known results on MDS self-dual codes of even length $n$}
\vskip 3mm
\begin{tabular}{|c|c|c|}
  \hline
  $q$ & $n$ & References \\
  \hline
  $q$ even  & $n \leq q$ & \cite{GG08} \\
  \hline
  $q$ odd  & $n = q+1$ & \cite{GG08,JX17} \\
  \hline
  $q=r^{2}$ & $n \leq r$ & \cite{JX17} \\
  \hline
  \tabincell{c}{$q=r^{2}$,\\ $r \equiv 3 (\textnormal{mod } 4)$} & $n=2tr, t \leq \frac{r-1}{2}$ & \cite{JX17} \\
  \hline
  $q \equiv 1 (\textnormal{mod } 4)$ & $4^{n}n^{2} \leq q$ & \cite{JX17} \\
  \hline
  $q \equiv \textnormal{3 (mod 4)}$ & $n \equiv \textnormal{0 (mod 4)}$ and $(n-1) \mid (q-1)$ & \cite{TW17}\\
  \hline
  $q \equiv 1 (\textnormal{mod } 4)$ & $(n-1) \mid (q-1)$ & \cite{TW17} \\
  \hline
  $q \equiv 1 (\textnormal{mod } 4)$ & $n=2p^{\ell}, \ell \leq m$ & \cite{FF19} \\
  \hline
  $q \equiv 1 (\textnormal{mod } 4)$ & $n=p^{\ell}+1, \ell \leq m$ & \cite{FF19} \\
  \hline
  \tabincell{c}{$q=r^{s}$,\\ $r$ odd, $s$ even} & \tabincell{c}{$n=2tr^{\ell}$, $0 \leq \ell <s$, \\ and $1 \leq t \leq \frac{r-1}{2}$} & \cite{FF19} \\
  \hline
  \tabincell{c}{$q=r^{s}$, \\$r$ odd, $s$ is even} & \tabincell{c}{$n=(2t+1)r^{\ell}+1$, $0 \leq \ell <s$, \\and $0 \leq t \leq \frac{r-1}{2}$} & \cite{FF19} \\
  \hline
  $q$ odd & $(n-2) \mid (q-1)$, $\eta(2-n)=1$ & \cite{FF19,Y18} \\
  \hline
  $q \equiv 1 (\textnormal{mod } 4)$ & $n \mid (q-1)$ & \cite{Y18} \\
  \hline
  \tabincell{c}{$q=r^{s}$,\\ $r$ odd, $s \geq 2$ }& $n=tr$, $t$ even  and  $2t \mid (r-1)$ &\cite{Y18}\\
  \hline
  \tabincell{c}{$q=r^{s}$, \\$r$ odd, $s \geq 2$} & \tabincell{c}{$n=tr$, $t$ even, $(t-1) \mid (r-1)$ \\and $\eta(1-t)=1$} &\cite{Y18}\\
  \hline
  \tabincell{c}{$q=r^{s}$, \\$r$ odd, $s \geq 2$} & \tabincell{c}{$n=tr+1$, $t$ odd, $t \mid (r-1)$ \\and $\eta(t)=1$} &\cite{Y18}\\
  \hline
  \tabincell{c}{$q=r^{s}$,\\ $r$ odd, $s \geq 2$} & \tabincell{c}{$n=tr+1$, $t$ odd, $(t-1) \mid (r-1)$\\ and $\eta(t-1)=\eta(-1)=1$} &\cite{Y18}\\
  \hline
  $q=r^{2}$, $r$ odd & \tabincell{c}{$n=tm$, $\frac{q-1}{m}$ even, \\ and $1 \leq t \leq \frac{r+1}{\gcd(r+1, m)}$} & \cite{FLLL19}\\
  \hline
  $q=r^{2}$, $r$ odd & \tabincell{c}{$n=tm+1$, $tm$ odd, $m \mid (q-1)$, \\ and $2 \leq t \leq \frac{r+1}{2\gcd(r+1, m)}$} & \cite{FLLL19}\\
  \hline
  $q=r^{2}$, $r$ odd & \tabincell{c}{$n=tm+2$, $m \mid (q-1)$, $tm$ even \\(except $t, m$ are even and $r \equiv 1 (\textnormal{mod } 4)$),  \\ and $1 \leq t \leq \frac{r+1}{\gcd(r+1, m)}$} & \cite{FLLL19}\\
  \hline
  $q=r^{2}$, $r$ odd & \tabincell{c}{$n=tm$,  $1 \leq t \leq \frac{s(r-1)}{\gcd(s(r-1), m)}$\\$\frac{q-1}{m}$ even, $s$ even, $s \mid m$, and $\frac{r+1}{s}$ even } & \cite{FLLL19}\\
  \hline
  $q=r^{2}$, $r$ odd & \tabincell{c}{$n=tm+2$,  $1 \leq t \leq \frac{s(r-1)}{\gcd(s(r-1), m)}$\\ $\frac{q-1}{m}$ even, $s$ even, $s \mid m$, and $\frac{r+1}{s}$ even } & \cite{FLLL19}\\
  \hline
  $q=r^{2}$, $r$ odd & \tabincell{c}{$n=tm$, $\frac{q-1}{m}$ even, \\ and $1 \leq t \leq \frac{r-1}{\gcd(r-1, m)}$} & \cite{LLL19}\\
  \hline
  $q=r^{2}$, $r$ odd & \tabincell{c}{$n=tm+1$, $tm$ odd, $m \mid (q-1)$, \\ and $2 \leq t \leq \frac{r-1}{\gcd(r-1, m)}$} & \cite{LLL19}\\
  \hline
  $q=r^{2}$, $r$ odd & \tabincell{c}{$n=tm+2$, $tm$ even, $m \mid (q-1)$, \\ and $2 \leq t \leq \frac{r-1}{\gcd(r-1, m)}$} & \cite{LLL19}\\
  \hline
  \tabincell{c}{$q=r^{2}$,\\ $r \equiv 1 (\textnormal{mod } 4)$ }& \tabincell{c}{$n=s(r-1)+t(r+1)$, $s$ even, \\$1 \leq s \leq \frac{r+1}{2}$ $1 \leq t \leq \frac{r-1}{2}$} & \cite{FLL19} \\
  \hline
  \tabincell{c}{$q=r^{2}$, \\$r \equiv 3 (\textnormal{mod } 4)$} & \tabincell{c}{$n=s(r-1)+t(r+1)$, $s$ odd, \\$1 \leq s \leq \frac{r+1}{2}$ $1 \leq t \leq \frac{r-1}{2}$} & \cite{FLL19} \\
  \hline
\end{tabular}
\end{table}

In this paper, we investigate the construction of MDS self-dual codes with new parameters by using (extended) GRS codes over finite fields. The main idea of our constructions is to consider two multiplicative subgroups $H$ and $G$ of $\mathbb{F}_{q}^{*}$, where $H$ is a subgroup of $G$. We consider the union of some costes of $H$ in $G$ as the evaluation set. Then we present two new constructions (see Theorems \ref{thm1} and \ref{thm2}) of MDS self-dual codes via Lemmas \ref{lem2} and \ref{lem3}.  Moreover, by considering some automorphisms of GRS codes, we give a systematic way to construct self-dual GRS codes provided any known self-dual GRS code. As a corollary, all the self-dual extended GRS codes over finite field $\mathbb{F}_{q}$ with length $n< q+1$ can be constructed from self-dual GRS codes with the same parameters.

We list the parameters of our new MDS self-dual codes as follows. Specifically, if one of the following conditions holds, then there exists a self-dual GRS code of length $N$ over $\mathbb{F}_{q}$, where $q=r^{2}$ and $r$ is a power of an odd prime $p$.

\begin{description}
  \item[(i)] $N=tn'$ is even, $n' \mid (q-1)$, $n_{2}=\frac{r+1}{\gcd(r+1, n')}$ is even, and $1 \leq t \leq \frac{r-1}{n_{2}}$; (see Theorem \ref{thm1} (i))
  \item[(ii)] $N=tn'+2$ is even, $n' \mid (q-1)$, $n_{2}=\frac{r+1}{\gcd(r+1, n')}$, and $1 \leq t \leq \frac{r-1}{n_{2}}$; (see Theorem \ref{thm1} (iii))
  \item[(iii)] $N=tn'$ is even, $n' \mid (q-1)$, both $\frac{r-1}{n_{1}}$ and $tn_{2}$ are even, where $n_{1}=\gcd(r-1, n')$, $n_{2}=\frac{r-1}{\gcd(r-1, n')}$ , and $1 \leq t \leq \frac{r-1}{n_{2}}$; (see Theorem \ref{thm2} (i))
  \item[(iv)] $N=tn'+2$ is even, $n' \mid (q-1)$, $n_{2}=\frac{r-1}{\gcd(r-1, n')}$ both $n_{2}$ and $\frac{r+1}{2}(t-1)$ are even, or $n_{2}$ is odd and $t$ is even with $1 \leq t \leq \frac{r-1}{n_{2}}$; (see Theorem \ref{thm2} (ii))
\end{description}

The rest of this paper is organized as follows. In Section 2, we briefly introduce some basic notations and results about generalized Reed-Solomon codes and MDS self-dual codes. In Section 3, four new families of self-dual GRS codes are constructed. In Section 4, we give a systematic way to construct self-dual GRS codes from known self-dual GRS code. Finally, we conclude this paper in Section 5.


\section{Preliminaries}
In this section, we recall some basic notations and results about generalized Reed-Solomon codes and MDS self-dual codes.

Let $q$ be a prime power and $\mathbb{F}_{q}$ be the finite field with $q$ elements. Let $n$ and $k$ be two integers such that $2\leq n\leq q+1$ and $2\leq k\leq n.$  Write the finite field $\mathbb{F}_q=\{\alpha_1, \cdots, \alpha_q\}$ and let $\infty$ be the infinity point. For any $\alpha\in \mathbb{F}_{q}\cup\infty$, let $\bm c_k(\alpha)$ be a column vector of $\mathbb{F}_{q}^k$ defined as
\[
\bm c_k(\alpha)\triangleq\begin{cases}
(1,\alpha,\alpha^2,\cdots, \alpha^{k-1})^T\in\mathbb{F}_{q}^k&\mbox{if $\alpha\in\mathbb{F}_{q}$;}\\
(0,0,\cdots,0,1)^T\in\mathbb{F}_{q}^k&\mbox{if $\alpha=\infty$.}\\
\end{cases}
\]
\begin{definition}
Let $A=\{a_1,a_2,\cdots, a_n\}$ be a subset of $\mathbb{F}_{q}\cup\infty$ and $\bm v=(v_1,v_2,\cdots,v_n)$ be a vector in $(\mathbb{F}_{q}^*)^n$. For any $2\leq k\leq n,$ if $\infty \notin A$, the linear code with generator matrix $[v_1\bm c_k(a_1), v_2\bm c_k(a_2), \cdots, v_n\bm c_k(a_n)]=[\bm c_k(a_1), \bm c_k(a_2), \cdots, \bm c_k(a_n)]\mbox{diag}\{v_1,v_2,\cdots,v_n\}$ is called the generalized Reed-Solomon code. If $\infty\in S$, then the linear code is called extended generalized Reed-Solomon code. We call them (extended) generalized Reed-Solomon (GRS) code with evaluation set $A$ and scaling vector $\bm v$, and denoted by $GRS_k(A,\bm v)$.
\end{definition}
 It is well-known that $GRS_k(A,\bm v)$ is an $[n,k,n-k+1]$-MDS code and its dual code is also a GRS code.

Recently, lots of research work on construction of MDS self-dual codes have been done by using GRS codes. The key point of these constructions is to choose suitable evaluation set $A$ such that the corresponding GRS code is self-dual for some scaling vector $\bm v=(v_1,v_2,\cdots,v_n)\in(\mathbb{F}_{q}^*)^n$. In the following, we introduce some related notations and results.

Let $A$ be a subset of  $\mathbb{F}_{q}$, we define the polynomial $\pi_{A}(x)$ over $\mathbb{F}_{q}$ as
\[\pi_{A}(x) \triangleq \prod_{a \in A}(x-a).\]
For any element $a \in A$, we define
\[\delta_{A}(a) \triangleq \prod_{a' \in A, a' \neq a}(a-a').\]

The properties of $\pi_{A}(x)$ and $\delta_{A}(a)$ are given as follows, which were first obtained in \cite{ZF19}. We present its proof for the completeness.
\begin{lemma}\cite[Lemma 3.1]{ZF19}\label{lem1}
\begin{description}
  \item[(i)] Let $A$ be a subset of $\mathbb{F}_{q}$, then for any $a \in A$
  \[\delta_{A}(a)=\pi'_{A}(a),\]
  where $\pi'_{A}(x)$ is the derivative of $\pi_{A}(x)$.
  \item[(ii)] Let $A_{1},A_{2},\dots, A_{m}$ be $m$ pairwise disjoint subsets of $\mathbb{F}_{q}$, and $A=\bigcup_{\ell=1}^{m}A_{\ell}$. Then for any $a \in A_{i}$,
      \[\delta_{A}(a)=\delta_{A_{i}}(a)\prod_{1 \leq j \leq \ell, j \neq i}\pi_{A_{j}}(a).\]
\end{description}
\end{lemma}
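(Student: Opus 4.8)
The plan is to exploit the fact that both statements are purely formal consequences of the product expression $\pi_A(x) = \prod_{a' \in A}(x-a')$, so no deep machinery is needed; the entire argument amounts to a careful bookkeeping of which factors vanish upon evaluation at a point of $A$. I would prove part (i) by differentiation and then either derive part (ii) from it or, more cleanly, argue part (ii) directly from multiplicativity of $\pi$.

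For part (i), I would begin by differentiating $\pi_A(x)$ with the Leibniz product rule, which gives
\[\pi'_A(x) = \sum_{a' \in A} \prod_{a'' \in A,\, a'' \neq a'}(x - a'').\]
The key observation is then to evaluate at $x = a$ for a fixed $a \in A$: every summand indexed by some $a' \neq a$ still contains the factor $(x - a)$, which becomes zero, so all of these terms drop out. Only the term $a' = a$ survives, and it equals $\prod_{a'' \in A,\, a'' \neq a}(a - a'') = \delta_A(a)$ by definition. This yields $\pi'_A(a) = \delta_A(a)$.

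For part (ii), I would first record the multiplicativity of $\pi$ over disjoint unions: since the $A_\ell$ partition $A$, grouping the factors gives $\pi_A(x) = \prod_{\ell=1}^m \pi_{A_\ell}(x)$. Fixing $a \in A_i$, I would split the defining product for $\delta_A(a)$ according to which block each $a'$ lies in,
\[\delta_A(a) = \prod_{\ell=1}^m \ \prod_{a' \in A_\ell,\, a' \neq a}(a - a'),\]
and observe that the block $\ell = i$ contributes exactly $\delta_{A_i}(a)$, while for each $\ell \neq i$ disjointness guarantees $a \notin A_\ell$, so the restriction $a' \neq a$ is vacuous and the inner product is the full product $\pi_{A_\ell}(a)$. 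Collecting these factors gives $\delta_A(a) = \delta_{A_i}(a)\prod_{j \neq i}\pi_{A_j}(a)$, as claimed. (Alternatively, part (ii) follows from part (i) by applying the product rule to $\pi_A = \prod_\ell \pi_{A_\ell}$ and noting that the factor $\pi_{A_i}(a) = 0$ kills every term except $\ell = i$; I would favour the factorization argument since it avoids differentiation entirely.)

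There is no genuine obstacle here: the single point requiring care is the vanishing bookkeeping, namely verifying that all but one summand in part (i) die, and that the restriction $a' \neq a$ becomes automatic off the block $A_i$ in part (ii). I would state these cancellations explicitly rather than hand-wave them, since they are the entire content of the lemma.
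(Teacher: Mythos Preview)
Your proposal is correct and, for part (i), identical to the paper's argument via the product rule and the vanishing of all but one summand. For part (ii) you favour the direct block-factorization of $\delta_A(a)$, whereas the paper instead differentiates $\pi_A=\prod_\ell \pi_{A_\ell}$ and invokes part (i) together with $\pi_{A_i}(a)=0$ --- precisely the alternative you mention in parentheses; both routes are equally elementary, and your direct version has the small advantage of not relying on part (i).
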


\begin{proof}
\begin{description}
  \item[(i)] The conclusion follows from $\pi'_{A}(x)=\sum_{a \in A}\prod_{a' \in A, a' \neq a}(x-a').$
  \item[(ii)] Note that $\pi_{A}(x)=\prod_{\ell=1}^{m}\pi_{A_{\ell}}(x)$. Thus
  \[\pi'_{A}(x)=\sum_{\ell=1}^{m}\pi'_{A_{\ell}}(x)\prod_{1 \leq j \leq m, j \neq \ell}\pi_{A_{j}}(x).\]
  Part (ii) then follows from Part (i) and the fact that $\pi_{A_{i}}(a)=0$ for any $a \in A_{i}$.
\end{description}

\end{proof}
Let $\eta(x)$ be the quadratic character of $\mathbb{F}^{*}_{q}\triangleq\mathbb{F}_{q}\backslash\{0\}$, that is $\eta(x)=1$ if $x$ is a square in $\mathbb{F}^{*}_{q}$ and $\eta(x)=-1$ if $x$ is a non-square in $\mathbb{F}^{*}_{q}$. The following Lemmas \ref{lem2} and \ref{lem3} are the key lemmas for the construction of MDS self-dual codes, which have been used in the literature with some equivalent forms. The reader may refer to \cite{JX17, Y18, FF19, ZF19, FLLL19, LLL19, FLL19} for more details on their proofs.
\begin{lemma}\label{lem2}
Suppose $n$ is even. Let $A$ be a subset of $\mathbb{F}_{q}$ of size $n$, such that for all $a \in A$, $\eta\big(\delta_{A}(a)\big)$ are the same. Then there exists a vector $\bm v \in(\mathbb{F}_{q}^*)^n$ such that $GRS_{\frac{n}{2}}(A,\bm v)$ is self-dual. Consequently, there exists a $q$-ary MDS self-dual code of length $n$.
\end{lemma}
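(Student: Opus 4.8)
The plan is to realize self-duality through self-orthogonality. For every $\bm v\in(\mathbb{F}_q^*)^n$ the code $GRS_{\frac n2}(A,\bm v)$ is an MDS code of length $n$ and dimension $\frac n2$, and a self-orthogonal code of dimension $\frac n2$ in $\mathbb{F}_q^n$ is automatically self-dual (since $\dim C^{\perp}=n-\frac n2=\frac n2=\dim C$ forces $C=C^{\perp}$). Hence it suffices to produce a single $\bm v\in(\mathbb{F}_q^*)^n$ for which $GRS_{\frac n2}(A,\bm v)$ is self-orthogonal. Writing $A=\{a_1,\dots,a_n\}$, a typical codeword is $\big(v_1 f(a_1),\dots,v_n f(a_n)\big)$ with $\deg f<\frac n2$, so self-orthogonality is equivalent to
\[
\sum_{i=1}^n v_i^2\, f(a_i)g(a_i)=0\qquad\text{for all } f,g\in\mathbb{F}_q[x]\text{ with }\deg f,\deg g<\tfrac n2 .
\]

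First I would reduce this to a single linear condition. As $f,g$ range over polynomials of degree $<\frac n2$, the products $fg$ span exactly the space of polynomials of degree $\le n-2$ (every monomial $x^m$ with $m\le n-2$ factors as $x^a x^b$ with $a,b\le\frac n2-1$). Hence, setting $w_i=v_i^2$, self-orthogonality is equivalent to $\sum_{i=1}^n w_i\, h(a_i)=0$ for every $h$ with $\deg h\le n-2$.

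The key computation is a Lagrange-interpolation identity. For any $h$ of degree $\le n-1$ one has $h(x)=\sum_{i=1}^n h(a_i)\,\ell_i(x)$ with $\ell_i(x)=\frac{\pi_A(x)}{(x-a_i)\,\pi_A'(a_i)}$; comparing the coefficient of $x^{n-1}$ and using $\pi_A'(a_i)=\delta_A(a_i)$ from Lemma \ref{lem1}(i), the leading coefficient of $h$ equals $\sum_{i=1}^n \frac{h(a_i)}{\delta_A(a_i)}$. In particular $\sum_{i=1}^n \frac{h(a_i)}{\delta_A(a_i)}=0$ whenever $\deg h\le n-2$. Consequently, choosing $w_i=\frac{c}{\delta_A(a_i)}$ for any constant $c\in\mathbb{F}_q^*$ already forces $\sum_i w_i h(a_i)=0$ for all such $h$, so any $\bm v$ with $v_i^2=w_i$ gives a self-orthogonal code.

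The remaining—and main—obstacle is to guarantee that $\bm v$ can be taken in $(\mathbb{F}_q^*)^n$, i.e.\ that each $w_i=\frac{c}{\delta_A(a_i)}$ is a nonzero square in $\mathbb{F}_q$. This is exactly where the hypothesis enters decisively: since $\eta\big(\tfrac{c}{\delta_A(a_i)}\big)=\eta(c)\,\eta\big(\delta_A(a_i)\big)$ and all the $\eta(\delta_A(a_i))$ equal a common sign $\varepsilon$, it suffices to pick $c$ with $\eta(c)=\varepsilon$ (take $c=1$ if $\varepsilon=1$, and $c$ a fixed non-square if $\varepsilon=-1$). Then every $w_i$ is a square, so we may choose $v_i\in\mathbb{F}_q^*$ with $v_i^2=w_i$. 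The resulting $GRS_{\frac n2}(A,\bm v)$ is self-orthogonal of dimension $\frac n2$, hence a self-dual MDS code of length $n$, which also proves the final assertion. The only genuinely nontrivial point is this simultaneous square condition; the interpolation identity is routine once $\delta_A=\pi_A'$ is in hand.
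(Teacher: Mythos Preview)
Your proof is correct and follows the standard argument: the Lagrange-interpolation identity yields $\sum_i h(a_i)/\delta_A(a_i)=0$ for $\deg h\le n-2$, which identifies the dual of a GRS code as another GRS code, and the common-sign hypothesis lets one extract square roots for the scaling vector. The paper itself does not prove this lemma but refers the reader to \cite{JX17, Y18, FF19, ZF19, FLLL19, LLL19, FLL19}; your argument is precisely the approach found in those references (notably \cite{JX17}), so there is no discrepancy to report.
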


\begin{lemma}\label{lem3}
Suppose $n$ is odd. Let $A$ be a subset of $\mathbb{F}_{q}$ of size $n$, such that for all $a \in A$, $\eta\big(-\delta_{A}(a)\big)=1$. Then there exists a vector $\bm v \in(\mathbb{F}_{q}^*)^{n+1}$ such that $GRS_{\frac{n+1}{2}}(A\cup \infty, \bm v)$ is self-dual. Consequently, there exists a $q$-ary MDS self-dual code of length $n+1$.
\end{lemma}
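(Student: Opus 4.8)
The plan is to determine the dual of a general extended GRS code and then to choose the scaling vector $\bm v$ so that the code coincides with its dual. Throughout I would write $N=n+1$ and $k=\frac{n+1}{2}$, so that the only dimension admitting self-duality is $k=N-k$; this fixes the dimension of the code we want to build. To compute $GRS_k(A\cup\infty,\bm v)^{\perp}$, write $A=\{a_1,\dots,a_n\}$ and $\bm v=(v_1,\dots,v_n,v_\infty)$. By the convention $\bm c_k(\infty)=(0,\dots,0,1)^T$, a codeword of $GRS_k(A\cup\infty,\bm v)$ is $(v_1f(a_1),\dots,v_nf(a_n),v_\infty f_{k-1})$, where $f=\sum_{j=0}^{k-1}f_jx^j$ runs over polynomials of degree $<k$ and $f_{k-1}$ is its leading coefficient. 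I would guess that the dual is again an extended GRS code $GRS_{N-k}(A\cup\infty,\bm u)$ and pin down $\bm u=(u_1,\dots,u_n,u_\infty)$ by orthogonality: pairing the codeword above with $(u_1g(a_1),\dots,u_ng(a_n),u_\infty g_{N-k-1})$, where $\deg g<N-k$, gives $\sum_{i=1}^{n}u_iv_if(a_i)g(a_i)+u_\infty v_\infty f_{k-1}g_{N-k-1}$. Since $f_{k-1}g_{N-k-1}$ is exactly the coefficient of $x^{n-1}$ in the product $fg$, and since the products $fg$ span all polynomials of degree $\le n-1$, this vanishes for all $f,g$ iff, setting $w_i=u_iv_i$, one has $\sum_{i=1}^{n}w_ia_i^{\,j}=0$ for $0\le j\le n-2$ and $\sum_{i=1}^{n}w_ia_i^{\,n-1}+u_\infty v_\infty=0$.

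Next I would solve this system. The Lagrange-interpolation identity $\sum_{i=1}^{n}\frac{a_i^{\,j}}{\delta_{A}(a_i)}$, which equals $0$ for $0\le j\le n-2$ and $1$ for $j=n-1$ (this uses $\delta_{A}(a_i)=\pi'_{A}(a_i)$ from Lemma \ref{lem1} and the fact that the leading coefficient of the interpolant of $x^{j}$ at the nodes $a_i$ is $[\,j=n-1\,]$), shows that $w_i=\frac{c}{\delta_{A}(a_i)}$ solves the first $n-1$ equations for every constant $c$, whence the last equation forces $u_\infty v_\infty=-c$. Thus $u_i=\frac{c}{v_i\delta_{A}(a_i)}$ and $u_\infty=-\frac{c}{v_\infty}$; all entries are nonzero when $c\ne0$, so a dimension count confirms $GRS_k(A\cup\infty,\bm v)^{\perp}=GRS_{N-k}(A\cup\infty,\bm u)$. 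The essential point is the minus sign on $u_\infty$, produced by transferring $\sum_iw_ia_i^{\,n-1}=c$ across the equality; this is exactly what will convert the eventual square condition from $\delta_{A}(a)$ into $-\delta_{A}(a)$.

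Finally I would enforce self-duality. Since $k=N-k$, the equality $GRS_k(A\cup\infty,\bm u)=GRS_k(A\cup\infty,\bm v)$ holds iff $\bm u=\mu\bm v$ for some $\mu\in\mathbb{F}_q^{*}$, that is, $v_i^2\delta_{A}(a_i)=c/\mu$ for all $i$ and $v_\infty^2=-c/\mu$. Choosing $c/\mu=-1$ reduces this to $v_i^2=-1/\delta_{A}(a_i)$ and $v_\infty^2=1$; by hypothesis $\eta\big(-1/\delta_{A}(a_i)\big)=\eta\big(-\delta_{A}(a_i)\big)=1$, so each $v_i$ exists in $\mathbb{F}_q^{*}$, and one may take $v_\infty=1$. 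This produces the required $\bm v$, and since every GRS code is MDS, the resulting self-dual code is an MDS self-dual code of length $n+1$.

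The main obstacle is the first step: correctly identifying the dual of the extended GRS code, and in particular tracking the contribution of the point at infinity together with its sign. All of the downstream arithmetic—most importantly the appearance of $-\delta_{A}(a)$ rather than $\delta_{A}(a)$, which is precisely what distinguishes this lemma from Lemma \ref{lem2}—rests on that computation. I would verify it carefully by evaluating the induced linear functional $H\mapsto\sum_iw_iH(a_i)+u_\infty v_\infty\,[x^{n-1}]H$ on the monomial basis $\{x^{j}\}_{j=0}^{n-1}$ and checking the power-sum identities head-on.
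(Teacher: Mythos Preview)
Your argument is correct. The paper does not actually prove Lemma~\ref{lem3}; it simply states that the result ``has been used in the literature with some equivalent forms'' and refers the reader to \cite{JX17, Y18, FF19, ZF19, FLLL19, LLL19, FLL19} for details. What you have written is precisely the standard computation carried out in those references (most explicitly in \cite{Y18} and \cite{FF19}): determine the dual of the extended GRS code via the Lagrange power-sum identities $\sum_i a_i^{\,j}/\delta_A(a_i)=[\,j=n-1\,]$, observe that the $\infty$-coordinate contributes the extra term $u_\infty v_\infty=-c$, and then solve $v_i^2=-1/\delta_A(a_i)$, $v_\infty^2=1$ using the hypothesis $\eta(-\delta_A(a_i))=1$.

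One small remark: in the self-duality step you assert that $GRS_k(A\cup\infty,\bm u)=GRS_k(A\cup\infty,\bm v)$ \emph{iff} $\bm u=\mu\bm v$. You only need (and only use) the trivial ``if'' direction, so the argument stands regardless; but if you want to keep the ``iff'' phrasing you should either justify it or simply drop it, since the converse requires an additional short argument about monomial automorphisms of GRS codes. Alternatively, since the parameter $c$ in your description of $\bm u$ is already free, you can absorb $\mu$ into $c$ and just demand $\bm u=\bm v$, which sidesteps the issue entirely.
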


\section{New Constructions of MDS Self-Dual Codes}

In this section, based on Lemmas \ref{lem2} and \ref{lem3}, we give two new constructions of MDS self-dual codes via different multiplicative subgroups of finite fields. Throughout this section, we suppose that $q=r^{2}$ and $r=p^{m}$, where $p$ is an odd prime.

 Let $n'$ be a positive integer with $n' \mid (q-1)$. We write $n'=n_{1}n_{2}$, where $n_{1}=\gcd(n', r+1)$ and $n_{2}=\frac{n'}{n_{1}}=\frac{n'}{\gcd(n',r+1)}$. Then $n_{2} \mid (r-1)\frac{r+1}{n_{1}}$. Note that $\gcd(n_{2}, \frac{r+1}{n_{1}})=\gcd(\frac{n'}{n_{1}}, \frac{r+1}{n_{1}})=1$, hence
\[n_{2} \mid (r-1).\]
Let $\omega$ be a primitive element of $\mathbb{F}_{q}$. Denote
\[H=\langle \omega^{\frac{q-1}{n'}} \rangle, G=\langle\omega^{\frac{r+1}{n_{1}}}\rangle.\]
Then $|H|=n'$. Note that $\frac{q-1}{n'}=\frac{r+1}{n_{1}}\cdot\frac{r-1}{n_{2}}$, thus $ \frac{r+1}{n_{1}} \mid \frac{r-1}{n'}$, which deduces that $H$ is a subgroup of $G$.
Then there exist $\beta_{1}, \ldots , \beta_{\frac{r-1}{n_{2}}} \in G$
such that $\{\beta_{b} H\}^{\frac{r-1}{n_{2}}}_{b
=1}$ represent all cosets of $G/H$.

Let $1 \leq t \leq \frac{r-1}{n_{2}}$ and $n=tn'$. Denote $A_{b}=\beta_{b}H$,
\begin{equation}\label{1}
 A=\bigcup^{t}_{b=1}A_{b},
\end{equation}
and
\begin{equation}\label{2}
 A_{0}=A\cup\{0\}.
\end{equation}

Based on \eqref{1} and \eqref{2}, we give our first construction as follows.

\begin{theorem}\label{thm1}
   Let $q=r^{2}$.  Suppose $n' \mid (q-1)$ and $n'=n_{1}n_{2}$, where $n_{1}=\gcd(n', r+1)$ and $n_{2}=\frac{n'}{\gcd(n',r+1)}$. Let $1 \leq t \leq \frac{r-1}{n_{2}}$ and $n=tn'$,
\begin{description}
       \item[(i)] if both $n$ and $\frac{r+1}{n_{1}}$ are even, then there exists a $q$-ary MDS  self-dual code of length $n$;
       \item[(ii)] if $n$ is odd, then there exists a $q$-ary MDS self-dual code of length $n+1$;
       \item[(iii)] if $n$ is even, then there exists a $q$-ary MDS self-dual code of length $n+2$.
     \end{description}
\end{theorem}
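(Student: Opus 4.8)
The plan is to verify the hypotheses of Lemmas \ref{lem2} and \ref{lem3} for the evaluation sets $A$ and $A_0$: once the quadratic character of $\delta_A(\cdot)$ (resp.\ $\delta_{A_0}(\cdot)$) is shown to be suitably controlled, the three conclusions follow by taking the GRS dimension to be $\frac{n}{2}$, $\frac{n+1}{2}$, $\frac{n+2}{2}$ respectively. First I would record the polynomial attached to each coset: since $H$ is exactly the group of $n'$-th roots of unity, $\pi_{A_b}(x)=\prod_{h\in H}(x-\beta_b h)=x^{n'}-\gamma_b$, where $\gamma_b:=\beta_b^{n'}$. As the $\beta_b$ represent distinct cosets of $H$ in $G$ and $\beta_b H=\beta_{b'}H\iff\gamma_b=\gamma_{b'}$, the elements $\gamma_1,\dots,\gamma_t$ are distinct. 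Applying Lemma \ref{lem1}(ii) to the disjoint union $A=\bigcup_{b=1}^t A_b$ together with Lemma \ref{lem1}(i), for $a\in A_b$ one gets
\[
\delta_A(a)=\delta_{A_b}(a)\!\!\prod_{1\le j\le t,\,j\neq b}\!\!\pi_{A_j}(a)=n'a^{\,n'-1}\!\!\prod_{1\le j\le t,\,j\neq b}\!\!(\gamma_b-\gamma_j)=n'\,\gamma_b\,a^{-1}\,S_b,
\]
where $S_b:=\prod_{j\neq b}(\gamma_b-\gamma_j)$ and I used $a^{n'}=\gamma_b$.

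The crucial observation---and the step I expect to carry the whole argument---is that $\gamma_b$, $S_b$ and $n'$ all lie in the subfield $\mathbb{F}_r$, in which every nonzero element is a square of $\mathbb{F}_q=\mathbb{F}_{r^2}$. Indeed $\gamma_b\in\langle\omega^{(r+1)n_2}\rangle$, a subgroup of $\mathbb{F}_r^*$ (one checks $\gamma_b^{\,r-1}=1$), so $\gamma_b\in\mathbb{F}_r^*$; hence $S_b$, being a product of differences of the $\gamma_j$, also lies in $\mathbb{F}_r^*$ and is nonzero; and $n'$ reduces to a nonzero element of the prime field $\mathbb{F}_p\subseteq\mathbb{F}_r$ since $p\nmid n'$. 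Because $x^{(r^2-1)/2}=(x^{r-1})^{(r+1)/2}=1$ for every $x\in\mathbb{F}_r^*$, we obtain $\eta(\gamma_b)=\eta(S_b)=\eta(n')=1$, and likewise $\eta(-1)=1$. Substituting into the displayed formula yields the clean identity $\eta(\delta_A(a))=\eta(a)$ for all $a\in A$.

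With this identity the three parts are short. For (i), the assumption that $\frac{r+1}{n_1}$ is even makes the generator $\omega^{(r+1)/n_1}$ of $G$ a square, so $\eta(a)=1$ for every $a\in A\subseteq G$; thus $\eta(\delta_A(a))=1$ is constant and Lemma \ref{lem2} (applicable since $n$ is even) yields the self-dual code of length $n$. For (ii), $n$ odd forces $n'$ and hence $n_1$ odd, so $\frac{r+1}{n_1}$ is again even and $\eta(a)=1$; therefore $\eta(-\delta_A(a))=\eta(-1)\eta(a)=1$, and Lemma \ref{lem3} gives length $n+1$. For (iii) I would pass to $A_0=A\cup\{0\}$ of odd size $n+1$ and apply Lemma \ref{lem1}(ii) once more: for $a\in A_b\subseteq A$ the extra factor $\pi_{\{0\}}(a)=a$ cancels the $a^{-1}$ above, leaving $\delta_{A_0}(a)=n'\gamma_b S_b\in\mathbb{F}_r^*$, while
\[
\delta_{A_0}(0)=(-1)^n\prod_{a\in A}a=\prod_{b=1}^t\gamma_b\cdot\Big(\prod_{h\in H}h\Big)^{t}\in\mathbb{F}_r^*
\]
using $n$ even and $\prod_{h\in H}h\in\{\pm1\}$. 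Both have trivial character, so $\eta(-\delta_{A_0}(a))=1$ for every $a\in A_0$, and Lemma \ref{lem3} produces the self-dual code of length $n+2$.

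The only genuinely delicate point is the subfield membership of $\gamma_b$ and $S_b$: a head-on evaluation of $\eta\big(\prod_{j\neq b}(\gamma_b-\gamma_j)\big)$ by pairing factors is $b$-dependent and does not obviously yield a constant, whereas the observation that these quantities live in $\mathbb{F}_r$ trivializes all the characters simultaneously. I would also keep an eye on the parity of $n'$ when evaluating $\prod_{h\in H}h$ and on the routine check $p\nmid n'$, but these cause no trouble.
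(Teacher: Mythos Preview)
Your proof is correct and follows essentially the same route as the paper: compute $\delta_A$ via Lemma~\ref{lem1}, observe that $\gamma_b=\beta_b^{n'}\in\mathbb{F}_r^{*}$ so that all auxiliary factors are automatically squares in $\mathbb{F}_q$, and then invoke Lemmas~\ref{lem2}--\ref{lem3}. The only cosmetic difference is in Part~(ii): the paper adjoins $0$ and applies Lemma~\ref{lem2} to $A_0$ of even size $n+1$, whereas you apply Lemma~\ref{lem3} directly to $A$ of odd size $n$ (using $n$ odd $\Rightarrow n_1$ odd $\Rightarrow (r+1)/n_1$ even, hence $\eta(a)=1$ on $A$); both arguments reach the same conclusion.
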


\begin{proof}
\textbf{(i):} Let $A$ be defined as \eqref{1}. Then $|A|=n$. For any $1 \leq b \leq t$,
  \[\pi_{A_{b}}(x)=\prod_{e \in A_{b}}(x-e)=\prod_{h \in H}(x-\beta_{b}h)=x^{n'}-\beta_{b}^{n'},\]
  \[\pi_{A_{b}}'(x)=n'x^{n'-1}.\]
  Given $\alpha \in A$, suppose $\alpha=\beta_{b}\omega^{j\frac{q-1}{n'}}$ for some $1 \leq b \leq t$ and $0 \leq j \leq n'-1$. Then by Lemma \ref{lem1},
  \begin{eqnarray*}
    \delta_{A}(\alpha) &=& \delta_{A_{b}}(\alpha)\prod_{s=1, s \neq b}^{t}\pi_{A_{s}}(\alpha) \\
                  &=& n'\alpha^{n'-1}\prod_{s=1, s \neq b}^{t}(\beta_{b}^{n'}-\beta_{s}^{n'}).
  \end{eqnarray*}
  Since $\beta_{b} \in G$, $\beta_{b}=\omega^{\mu_{b}\frac{r+1}{n_{1}}}$ for some $\mu_{b}$. Thus $\beta_{b}$ is a square element of $\mathbb{F}^{*}_{q}$. Since $\frac{q-1}{n'}=\frac{r+1}{n_{1}}\frac{r-1}{n_{2}}$ is even, thus $\alpha$ is a square. Note that for any $1 \leq j \leq t$,
  \[\beta_{j}^{n'}=\omega^{\mu_{j}\frac{r+1}{n_{1}}n'}=(\omega^{\mu_{j}\frac{n'}{n_{1}}})^{r+1} \in \mathbb{F}_{r}.\]
  Hence $\beta_{b}^{n'}-\beta_{s}^{n'} \in \mathbb{F}^{*}_{r}$, which is a square element of $\mathbb{F}^{*}_{q}$. Thus $\delta_{A}(\alpha)$ is a square element of $\mathbb{F}^{*}_{q}$, i.e., $\eta\big(\delta_{A}(\alpha)\big)=1$ for all $\alpha \in A$. The Part (i) then follows from Lemma \ref{lem2}.
  \vskip 2mm \noindent
\textbf{(ii)} and \textbf{(iii):} Let $A_{0}$ be defined as \eqref{2}. Then $|A_{0}|=n+1$. Similar to the proof of Part (i), for any $\alpha \in A$, we have
  \[\delta_{A_{0}}(\alpha)=\alpha\delta_{A}(\alpha)=n'\beta_{b}^{n'}\prod_{s=1, s \neq b}^{t}(\beta_{b}^{n'}-\beta_{s}^{n'}) \in \mathbb{F}_{r}.\]
  Moreover,
  \begin{eqnarray*}
    \delta_{A_{0}}(0) &=& \prod_{s=1}^{t}\pi_{A_{s}}(0) \\
      &=&  \prod_{s=1}^{t}(-\beta_{s}^{n'}) \in \mathbb{F}_{r}.
  \end{eqnarray*}
  Thus for any $e \in A_{0}$, $\eta(\delta_{A_{0}}(e))=\eta(-\delta_{A_{0}}(e))=1$, the conclusions of Part (ii) and Part (iii) then follow from Lemma \ref{lem2} and Lemma \ref{lem3}, respectively.
\end{proof}

\begin{remark}\label{rem1}
In \cite[Theorems 1, 2 and 3]{LLL19}, the authors constructed MDS self-dual codes of length $tn'$, $tn'+1$ and $tn'+2$, respectively, where $n' \mid (q-1)$ and $1 \leq t \leq \frac{r-1}{\gcd(r-1,n')}$. When $n'$ is odd, then $\gcd(r-1,n')\gcd(r+1,n')=n'$ since $n' \mid (r^{2}-1)$. Thus $\gcd(r-1,n')=\frac{n'}{\gcd(r+1,n')}=n_{2}$. The result of our Theorem \ref{thm1} (ii) is equivalent to \cite[Theorem 2]{LLL19}. When both $n'$ and $\frac{q-1}{n'}$ are even, then it can be verified that $\gcd(r-1,n')\gcd(r+1,n')=2n'$. Thus $\gcd(r-1,n')=\frac{2n'}{\gcd(r+1,n')}=2n_{2}$, i.e., $\frac{r-1}{\gcd(r-1,n')}=\frac{1}{2}\frac{r-1}{n_{2}}$.  Thus in this case, our Theorem \ref{thm1} (i) and (iii) are better than \cite[Theorems 1 and 3]{LLL19}, respectively.
\end{remark}

\begin{example}
In Theorem \ref{thm1} (i) and (iii), let $p=r=23$, $q=r^{2}=529$, $n'=12$. Then $n_{2}=\frac{n'}{\gcd(r+1,n')}=1$ and $\frac{r-1}{n_{2}}=22$. Let $t=13$, then $n=tn'=156$. So we can obtain two MDS self-dual codes of length 156 and 158 over $\mathbb{F}_{529}$ from Theorem \ref{thm1} (i) and (iii), respectively. Compared with the parameters of MDS self-dual codes given in Table 1,  these two codes are new.
\end{example}

In the following, we consider another multiplicative subgroup of $\mathbb{F}_{q}$. Suppose $n' \mid (q-1)$. We write $n'=n_{1}n_{2}$, where $n_{1}=\gcd(n', r-1)$ and $n_{2}=\frac{n'}{\gcd(n', r-1)}$. Then $n_{2} \mid (r+1)\frac{r-1}{n_{1}}$. Note that $\gcd(n_{2}, \frac{r-1}{n_{1}})=\gcd(\frac{n'}{n_{1}}, \frac{r-1}{n_{1}})=1$, hence
\[n_{2} \mid (r+1).\]
Let $\omega$ be a primitive element of $\mathbb{F}_{q}$. Denote
\[H=\langle \omega^{\frac{q-1}{n'}} \rangle, G=\langle\omega^{\frac{r-1}{n_{1}}}\rangle.\]
Then $|H|=n'$. Note that $\frac{q-1}{n'}=\frac{r-1}{n_{1}}\cdot\frac{r+1}{n_{2}}$, thus $ \frac{r-1}{n_{1}} \mid \frac{r+1}{n'}$, which deduces that $H$ is a subgroup of $G$.
Then there exist $\beta_{1}, \ldots , \beta_{\frac{r+1}{n_{2}}} \in G$
such that $\{\beta_{b} H\}^{\frac{r+1}{n_{2}}}_{b
=1}$ represent all cosets of $G/H$.

Let $1 \leq t \leq \frac{r+1}{n_{2}}$ and $n=tn'$. Denote $A_{b}=\beta_{b}H$,
\begin{equation}\label{3}
 A=\bigcup^{t}_{b=1}A_{b},
\end{equation}
and
\begin{equation}\label{4}
 A_{0}=A\cup\{0\}.
\end{equation}
Based on \eqref{3} and \eqref{4}, we give our second construction of MDS self-dual codes as follows.
\begin{theorem}\label{thm2}
   Let $q=r^{2}$.  Suppose $n' \mid (q-1)$ and $n'=n_{1}n_{2}$, where $n_{1}=\gcd(n', r-1)$ and $n_{2}=\frac{n'}{\gcd(n',r-1)}$. Let $1 \leq t \leq \frac{r+1}{n_{2}}$ and $n=tn'$,
\begin{description}
       \item[(i)] if both $\frac{r-1}{n_{1}}$ and $tn_{2}$ are even, then there exists a $q$-ary MDS self-dual code of length $n$;
       \item[(ii)] if both $n_{2}$ and $\frac{r+1}{2}(t-1)$ are even, or $n_{2}$ is odd and $t$ is even with $1 \leq t \leq \frac{r+1}{n_{2}}-1$, then there exists a $q$-ary MDS self-dual code of length $n+2$.
     \end{description}
\end{theorem}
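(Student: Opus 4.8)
The plan is to follow the template of the proof of Theorem~\ref{thm1}, computing $\delta_A(\alpha)$ coset-by-coset and controlling the quadratic character $\eta$, but now confronting the fact that the relevant constants are $(r+1)$-th roots of unity rather than elements of $\mathbb{F}_r$. Exactly as there, $\pi_{A_b}(x)=x^{n'}-\beta_b^{n'}$ and $\pi'_{A_b}(x)=n'x^{n'-1}$, so Lemma~\ref{lem1} gives, for $\alpha\in A_b$,
\[\delta_A(\alpha)=n'\alpha^{n'-1}\prod_{s=1,\,s\neq b}^{t}(\beta_b^{n'}-\beta_s^{n'}).\]
Writing $\beta_b=\omega^{(b-1)(r-1)/n_1}$ and $\zeta=\omega^{(r-1)n_2}$, one finds $\beta_b^{n'}=\zeta^{b-1}$, where $\zeta$ is a primitive $\tfrac{r+1}{n_2}$-th root of unity. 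This is the essential new feature relative to Theorem~\ref{thm1}: the $\beta_b^{n'}$ are $(r+1)$-th roots of unity (norm-one elements), so their pairwise differences need not lie in $\mathbb{F}_r$ and their characters are not automatically $1$. Since any $(r+1)$-th root of unity $z$ satisfies $\eta(z)=z^{(r-1)(r+1)/2}=1$, factoring $\eta(\zeta^{\,\cdot})=1$ out of each difference reduces everything to the single quantity $Q_d:=\eta(1-\zeta^d)$, because $\eta(\beta_b^{n'}-\beta_s^{n'})=Q_{|b-s|}$.

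The crux — the step I expect to be the main obstacle — is the explicit evaluation of $Q_d$. I would use the norm map $N:\mathbb{F}_q\to\mathbb{F}_r$, $N(u)=u^{r+1}$, and the standard identity $\eta(u)=\eta_r(N(u))$, where $\eta_r$ is the quadratic character of $\mathbb{F}_r$. Since $\zeta^r=\zeta^{-1}$,
\[N(1-\zeta^d)=(1-\zeta^d)(1-\zeta^{-d})=2-\zeta^d-\zeta^{-d}=-\zeta^{-d}(1-\zeta^d)^2,\]
and combining $(1-\zeta^d)^{r-1}=(1-\zeta^{-d})/(1-\zeta^d)=-\zeta^{-d}$ with $\zeta^{(r+1)/2}=(\omega^{(q-1)/2})^{n_2}=(-1)^{n_2}$ in $Q_d=N(1-\zeta^d)^{(r-1)/2}$ yields the clean formula
\[Q_d=(-1)^{\frac{r+1}{2}+n_2 d}.\]
All that remains is bookkeeping on exponents of $-1$.

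For part (i) I would combine three facts: $\eta(n')=1$ (as $n'$ reduces to a nonzero element of $\mathbb{F}_r$, a square in $\mathbb{F}_q$); $\eta(\alpha^{n'-1})=1$ for every $\alpha\in A$, which is where the hypothesis that $\tfrac{r-1}{n_1}$ is even enters, since it forces every element of $G$ (hence of the set $A$ in \eqref{3}) to be a square; and the fact that $P_b:=\prod_{s\neq b}Q_{|b-s|}$ is independent of $b$. The last point follows from $P_b=\big(\prod_{d=1}^{b-1}Q_d\big)\big(\prod_{d=1}^{t-b}Q_d\big)$, whence $P_{b+1}/P_b=Q_bQ_{t-b}=(-1)^{(r+1)+n_2 t}=(-1)^{n_2 t}$; so $P_b$ is constant precisely when $tn_2$ is even, which also makes $n=n_1(tn_2)$ even. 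Then $\eta(\delta_A(\alpha))=P_b$ is the same for all $\alpha\in A$ and Lemma~\ref{lem2} applies.

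For part (ii) I would instead apply Lemma~\ref{lem3} to $A_0=A\cup\{0\}$ from \eqref{4}, of odd size $n+1$. From $\delta_{A_0}(\alpha)=\alpha\,\delta_A(\alpha)$ one gets $-\delta_{A_0}(\alpha)=-n'\alpha^{n'}\prod_{s\neq b}(\beta_b^{n'}-\beta_s^{n'})$; the decisive observation is that $\alpha^{n'}=\beta_b^{n'}$ is a root of unity, so $\eta(\alpha^{n'})=1$ with no need for $\tfrac{r-1}{n_1}$ even, giving $\eta(-\delta_{A_0}(\alpha))=P_b$. The point $0$ is automatic: $\delta_{A_0}(0)=(-1)^t\prod_s\beta_s^{n'}$ is a sign times a root of unity, so $\eta(-\delta_{A_0}(0))=1$. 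Hence Lemma~\ref{lem3} reduces to $P_b=1$ for all $b$, i.e. (with $tn_2$ even making the $P_b$ equal) to
\[P_1=(-1)^{(t-1)\frac{r+1}{2}+n_2\frac{t(t-1)}{2}}=1.\]
The main obstacle here is the careful parity analysis of this exponent. The first clause, $n_2$ and $\tfrac{r+1}{2}(t-1)$ both even, kills the exponent cleanly (the second summand is even since $\tfrac{t(t-1)}{2}\in\mathbb{Z}$ and $n_2$ is even). The delicate regime — which I expect to be the true crux — is the second clause, $n_2$ odd and $t$ even, where one must track how $\tfrac{t(t-1)}{2}$ and the parity of $\tfrac{r+1}{2}$ interact; I would verify this last implication with particular care, as it is exactly the place where the sign $+1$ is forced.
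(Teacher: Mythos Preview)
Your overall strategy mirrors the paper's, and your explicit formula $Q_d=(-1)^{(r+1)/2+n_2d}$ via the norm map is a clean variant of the paper's computation of $\Omega^{r-1}$. Part~(i) and the first clause of part~(ii) go through as you describe. However, there is a genuine gap in the second clause of part~(ii).

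The problem is your commitment to the specific \emph{consecutive} coset representatives $\beta_b=\omega^{(b-1)(r-1)/n_1}$. With this choice your own formula gives $P_1=(-1)^{(t-1)(r+1)/2+n_2t(t-1)/2}$, and this is \emph{not} always $+1$ when $n_2$ is odd and $t$ is even: writing $t=2m$, the exponent is congruent to $\tfrac{r+1}{2}+m\pmod 2$. Concretely, take $r=5$, $n'=4$ (so $n_1=4$, $n_2=1$, $\tfrac{r+1}{n_2}=6$) and $t=4$; then $P_1=(-1)^{9+6}=-1$, and Lemma~\ref{lem3} does not apply. Your suspicion that this is ``the place where the sign $+1$ is forced'' is misplaced --- the hypotheses alone do not force it.

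The missing idea, and the reason the statement carries the extra restriction $t\le\tfrac{r+1}{n_2}-1$ in this clause (a condition you never use), is that the cosets in \eqref{3} need not be chosen consecutively. Writing $\beta_b^{n'}=\zeta^{\mu_b}$ for arbitrary distinct $\mu_b\in\{0,1,\ldots,\tfrac{r+1}{n_2}-1\}$, your computation yields, for $n_2$ odd and $t$ even, $P_b=(-1)^{(r+1)/2+\sum_s\mu_s}$, independent of $b$. Since $t$ is strictly less than the number of available cosets, one can always replace one chosen index by an unused one of opposite parity, adjusting $\sum_s\mu_s$ so that $P_b=1$. This freedom to select the cosets is precisely how the paper closes the argument.
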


\begin{proof}
\textbf{(i):} Let $A$ be defined as \eqref{3}. Then $|A|=n$. For any $1 \leq b \leq t$,
  \[\pi_{A_{b}}(x)=\prod_{e \in A_{b}}(x-e)=\prod_{h \in H}(x-\beta_{b}h)=x^{n'}-\beta_{b}^{n'},\]
  \[\pi_{A_{b}}'(x)=n'x^{n'-1}.\]
  Given $\alpha \in A$, suppose $\alpha=\beta_{b}\omega^{j\frac{q-1}{n'}}$ for some $1 \leq b \leq t$ and $0 \leq j \leq n'-1$. Then by Lemma \ref{lem1},
  \begin{eqnarray*}
    \delta_{A}(\alpha) &=& \delta_{A_{b}}(\alpha)\prod_{s=1, s \neq b}^{t}\pi_{A_{s}}(\alpha) \\
                  &=& n'\alpha^{n'-1}\prod_{s=1, s \neq b}^{t}(\beta_{b}^{n'}-\beta_{s}^{n'}).
  \end{eqnarray*}
  For any $1 \leq j \leq t$, since $\beta_{j} \in G$, we write $\beta_{j}=\omega^{\mu_{j}\frac{r-1}{n_{1}}}$ for some integer $\mu_{j}$. Thus
  \[\beta_{j}^{n'}=(\omega^{\mu_{j}\frac{r-1}{n_{1}}})^{n'}=(\omega^{\mu_{j}\frac{n'}{n_{1}}})^{r-1}.\]
  Hence
  \[\beta_{j}^{n'(r+1)}=1, i.e., \beta_{j}^{n'r}=\beta_{j}^{-n'}.\]
  Denote $\Omega=\prod_{s=1, s \neq b}^{t}(\beta_{b}^{n'}-\beta_{s}^{n'})$. Then
  \begin{eqnarray*}
    \Omega^{r} &=& \prod_{s=1, s \neq b}^{t}(\beta_{b}^{-n'}-\beta_{s}^{-n'}) \\
      &=& \prod_{s=1, s \neq b}^{t}(\beta_{b}\beta_{s})^{-n'}(\beta_{s}^{n'}-\beta_{b}^{n'}) \\
      &=& (-1)^{t-1}\beta_{b}^{-(t-2)n'} \prod_{s=1}^{t}\beta_{s}^{-n'}\Omega.
  \end{eqnarray*}
  Hence
  \begin{eqnarray*}
    \Omega^{r-1} &=& (-1)^{t-1}\beta_{b}^{-(t-2)n'} \prod_{s=1}^{t}\beta_{s}^{-n'} \\
      &=& (-1)^{t-1}\omega^{-\mu_{b}\frac{r-1}{n_{1}}(t-2)n'}\prod_{s=1}^{t}\omega^{-\mu_{s}\frac{r-1}{n_{1}}n'} \\
     &=& (-1)^{t-1}\omega^{-n_{2}(r-1)(\mu_{b}(t-2)+\sum_{s=1}^{t}\mu_{s})}.
  \end{eqnarray*}
  Note that $-1=\omega^{\frac{r+1}{2}(r-1)}$. Thus there exists an integer $k$ such that
  \[\Omega=\omega^{\frac{r+1}{2}(t-1)-n_{2}(\mu_{b}(t-2)+\sum_{s=1}^{t}\mu_{s})}.\]
   Note that $\alpha=\beta_{b}\omega^{j\frac{q-1}{n'}}=\omega^{\mu_{b}\frac{r-1}{n_{1}}+j\frac{q-1}{n'}}$ which is a square since both $\frac{r-1}{n_{1}}$ and $\frac{q-1}{n'}=\frac{r-1}{n_{1}}\frac{r+1}{n_{2}}$ are even. Thus $\eta\big(\delta_{A}(\alpha)\big)=\eta(\omega^{\frac{r+1}{2}(t-1)-n_{2}(\sum_{s=1}^{t}\mu_{s})})$ for all $\alpha \in A$. The Part (i) then follows from Lemma \ref{lem2}.
  \vskip 2mm \noindent
\textbf{(ii):} Let $A_{0}$ be defined as \eqref{4}. Then $|A_{0}|=n+1$. According to the proof of Part (i), for any $\alpha=\beta_{b}\omega^{j\frac{q-1}{n'}} \in A$, where $1 \leq b \leq t$ and $0 \leq j \leq n'-1$, we have
\begin{eqnarray*}
  \delta_{A_{0}}(\alpha) &=& \alpha\delta_{A}(\alpha) \\
    &=& n'\beta_{b}^{n'}\prod_{s=1, s \neq b}^{t}(\beta_{b}^{n'}-\beta_{s}^{n'}) \\
    &=& n'\omega^{\mu_{b}(r-1)n_{2}+\frac{r+1}{2}(t-1)-n_{2}(\mu_{b}(t-2)+\sum_{s=1}^{t}\mu_{s})},
\end{eqnarray*}
for some integer $k$. If both $n_{2}$ and $\frac{r+1}{2}(t-1)$ are even, then $\delta_{A_{0}}(\alpha)$ is a square in $\mathbb{F}^{*}_{q}$, i.e., $\eta(\delta_{A_{0}}(\alpha))=1$. If $n_{2}$ is odd and $t$ is even, then $\eta(\delta_{A_{0}}(\alpha))=\eta(\omega^{\frac{r+1}{2}+\sum_{s=1}^{t}\mu_{s}}).$ Since $1 \leq \mu_{1}, \mu_{2}, \dots, \mu_{t} \leq \frac{r+1}{n_{2}}$ and $1 \leq t \leq \frac{r+1}{n_{2}}-1$, we can always choose suitable $\mu_{1}, \mu_{2}, \dots, \mu_{t}$ such that $\frac{r+1}{2}+\sum_{s=1}^{t}\mu_{s}$ is even. Thus  $\eta(\delta_{A_{0}}(\alpha))=1$.
\vskip 2mm \noindent
Moreover,
  \begin{eqnarray*}
    \delta_{A_{0}}(0) &=& \prod_{s=1}^{t}\pi_{A_{s}}(0) \\
      &=&  \prod_{s=1}^{t}(-\beta_{s}^{n'}) \\
      &=& (-1)^{t}\omega^{n_{2}(r-1)\sum_{s=1}^{t}\mu_{s}},
  \end{eqnarray*}
  which is a square in $\mathbb{F}^{*}_{q}$. In summary, for any $e \in A_{0}$, $\eta(\delta_{A_{0}}(e))=\eta(-\delta_{A_{0}}(e))=1$. Part (ii) then follows from Lemma \ref{lem3}.
\end{proof}

\begin{remark}
In \cite[Theorems 1 and 3]{FLLL19}, the authors constructed MDS self-dual codes of length $tn'$ and $tn'+2$, respectively, where $n' \mid (q-1)$ and $1 \leq t \leq \frac{r+1}{\gcd(r+1,n')}$. Similar to Remark \ref{rem1}, when both $n'$ and $\frac{q-1}{n'}$ are even, then it can be verified that $\gcd(r-1,n')\gcd(r+1,n')=2n'$. Thus $\gcd(r+1,n')=\frac{2n'}{\gcd(r-1,n')}=2n_{2}$, i.e., $\frac{r+1}{\gcd(r+1,n')}=\frac{1}{2}\frac{r+1}{n_{2}}$.  Thus in this case, our Theorem \ref{thm2} (i) and (ii) are better than \cite[Theorems 1 and 2]{FLLL19}, respectively.
\end{remark}

\begin{example}

\begin{description}
  \item[(i)] In Theorem \ref{thm2}, let $r=25$ and $q=r^{2}=625$. Let $n'=12$ then $n_{2}=\frac{n'}{\gcd(n', r-1)}=1$ and $\frac{r+1}{n_{2}}=26$. Let $t=14$, then $n=tn'=168$, thus we can obtain an MDS self-dual code of length 168 over $\mathbb{F}_{625}$ from Theorem \ref{thm2} (i).
  \item[(ii)] In Theorem \ref{thm2}, let $r=19$ and $q=r^{2}=361$. Let $n'=12$ then $n_{2}=\frac{n'}{\gcd(n', r-1)}=2$ and $\frac{r+1}{n_{2}}=10$. Let $t=7$, then $\frac{r+1}{2}(t-1)$ is even and $n=tn'=84$, thus we can obtain an MDS self-dual code of length 86 over $\mathbb{F}_{361}$ from Theorem \ref{thm2} (ii).
\end{description}
Compared with the parameters of MDS self-dual codes given in Table 1,  these two codes are new.
\end{example}

\section{Self-Dual GRS Codes with Different Evaluation Sets}
In this section, we give a systematic way to construct self-dual GRS codes provided any known self-dual GRS code.

 Let $\mathbb{P}^{k}(\mathbb{F}_{q})$ be the projective geometry over $\mathbb{F}_{q}$ of dimension $k$ and let  $PGL_{k+1}(\mathbb{F}_{q})$ be the group of automorphisms of $\mathbb{P}^{k}(\mathbb{F}_{q})$.
\begin{definition}[M{\"o}bius action]
	For any $g= (\bbsm a & b \\ c & d \besm) \in PGL_2(\mathbb{F}_{q})$, the M{\"o}bius action on $\mathbb{P}^{1}(\mathbb{F}_{q})\simeq\mathbb{F}_{q}\cup\infty$ is defined to be $$g(t)=\frac{c+dt}{a+bt}=\begin{cases}
	\frac{c+dt}{a+bt},&\mbox{if $t\in \mathbb{F}_{q}\setminus\{-\frac{a}{b}\}$};\\
	\infty,&\mbox{if $t=-\frac{a}{b}$;}\\
	\frac{d}{b},&\mbox{if $t=\infty.$}
	\end{cases}$$
\end{definition}

\begin{definition} \label{def_gm}
	For each $2 \leq k \leq q$,  we define the map
	$GL_2(\mathbb{F}_{q}) \to  GL_k(\mathbb{F}_{q})$ denoted $g \mapsto g_k$  as follows. For $g = (\bbsm a & b \\ c & d \besm) \in GL_2(\mathbb{F}_{q})$,  the
	$(i,j)$-th ($1 \leq i,j \leq k$) entry of $g_k$  is the coefficient of $X^{j-1}$ in the polynomial $(a+b X)^{k-i} (c+d X)^{i-1}$.
\end{definition}

Recall that for any $\alpha\in \mathbb{F}_{q}\cup\infty$, $\bm c_k(\alpha)$ is defined as
\[
\bm c_k(\alpha)\triangleq\begin{cases}
(1,\alpha,\alpha^2,\cdots, \alpha^{k-1})^T\in\mathbb{F}_{q}^k&\mbox{if $\alpha\in\mathbb{F}_{q}$;}\\
(0,0,\cdots,0,1)^T\in\mathbb{F}_{q}^k&\mbox{if $\alpha=\infty$.}\\
\end{cases}
\]
We collect some properties of the matrix $g_k$ (see \cite{BGHK17}):
\begin{lemma}[{\cite[Propositions 2.6 and 2.5]{BGHK17}}]
	Keep the notations as above. Then
	\begin{enumerate}
		\item  The map $g \mapsto g_k$ is a group homomorphism  and the induced homomorphism $PGL_2(\mathbb{F}_{q}) \to PGL_k(\mathbb{F}_{q})$ (which we again denote by $g \mapsto g_k$) is a monomorphism.
		\item  For each $t \in \mathbb{F}_{q} \cup \infty$ we have
		\begin{equation*} \label{eq:aut_rnc} g_k \textbf{\emph{c}}_k(t) = \textbf{c}_k(g(t)) \in \mathbb{P}^{k-1}(\mathbb{F}_{q}). \end{equation*}
	\end{enumerate}
\end{lemma}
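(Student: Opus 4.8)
The plan is to establish the equivariance property (item 2) first by a direct evaluation, then to deduce the homomorphism property (the first half of item 1) by a parallel direct computation, and finally to obtain the monomorphism statement from equivariance together with the fact that a nontrivial M\"obius transformation has very few fixed points. Throughout I would work with $\bm{c}_k(X) = (1, X, X^2, \ldots, X^{k-1})^T$ where $X$ is treated as a formal indeterminate, since the identities are really polynomial identities that can then be specialized.

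For item 2, I would compute $g_k \bm{c}_k(X)$ componentwise. By Definition \ref{def_gm}, row $i$ of $g_k$ is precisely the coefficient vector of $(a+bX)^{k-i}(c+dX)^{i-1}$, so the $i$-th component of $g_k \bm{c}_k(X)$ equals $(a+bX)^{k-i}(c+dX)^{i-1}$. Specializing $X$ to a value $t \in \mathbb{F}_{q}$ with $a + bt \neq 0$ and factoring out $(a+bt)^{k-1}$ gives $(a+bt)^{k-1}\big(\tfrac{c+dt}{a+bt}\big)^{i-1}$, which is $(a+bt)^{k-1} \neq 0$ times the $i$-th entry of $\bm{c}_k(g(t))$; this is the desired projective identity. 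I would then dispatch the degenerate cases separately: at $t = \infty$ one reads off the leading coefficients $b^{k-i}d^{i-1}$ of $g_k$, which match $\bm{c}_k(d/b)$ when $b \neq 0$ and $\bm{c}_k(\infty)$ when $b = 0$; and at the pole $t = -a/b$ (with $b \neq 0$) all components vanish except the last, using $c + dt = -\det(g)/b \neq 0$, so the image is $\bm{c}_k(\infty) = \bm{c}_k(g(t))$.

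For the homomorphism property I would prove the genuine matrix identity $(gh)_k = g_k h_k$ in $GL_k(\mathbb{F}_{q})$ by applying both sides to $\bm{c}_k(X)$. Writing $h = \begin{pmatrix} a' & b' \\ c' & d'\end{pmatrix}$ and $u = a'+b'X$, $v = c'+d'X$, the $i$-th component of $g_k h_k \bm{c}_k(X)$ is $\sum_{j=1}^{k}(g_k)_{i,j}\,u^{k-j}v^{j-1}$. The elementary identity $\sum_{j=1}^{k}\big([Y^{j-1}]P(Y)\big)u^{k-j}v^{j-1} = u^{k-1}P(v/u)$, applied to $P(Y) = (a+bY)^{k-i}(c+dY)^{i-1}$, collapses this to $(au+bv)^{k-i}(cu+dv)^{i-1}$. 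A direct substitution shows that $au+bv$ and $cu+dv$ are exactly the linear forms $A+BX$ and $C+DX$ attached to the product matrix $gh = \begin{pmatrix} A & B \\ C & D\end{pmatrix}$, so the component equals $(A+BX)^{k-i}(C+DX)^{i-1}$, i.e. the $i$-th component of $(gh)_k\bm{c}_k(X)$. Matching coefficients of the powers of $X$ yields $g_k h_k = (gh)_k$ entrywise, and passage to $PGL$ is then immediate. I expect this step to be the main obstacle: the generating-function collapse and the verification that the resulting linear forms are \emph{exactly} those of $gh$ must be tracked with no loss of scalar factors, precisely because the claim is an equality in $GL_k$ rather than merely in $PGL_k$.

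Finally, for the monomorphism I would suppose $g_k$ is scalar (trivial in $PGL_k(\mathbb{F}_{q})$). By item 2, $\bm{c}_k(g(t)) = g_k\bm{c}_k(t)$ is then a scalar multiple of $\bm{c}_k(t)$ for every $t \in \mathbb{F}_{q} \cup \infty$; since $t \mapsto \bm{c}_k(t)$ into $\mathbb{P}^{k-1}(\mathbb{F}_{q})$ is injective (the rational normal curve is an embedding, using $k \geq 2$), this forces $g(t) = t$ for all $t$. A nonidentity element of $PGL_2(\mathbb{F}_{q})$ fixes at most two points of $\mathbb{P}^1(\mathbb{F}_{q})$ (its finite fixed points are the roots of $bt^2 + (a-d)t - c = 0$, with $\infty$ fixed only when $b = 0$), whereas $|\mathbb{P}^1(\mathbb{F}_{q})| = q + 1 \geq 3$; hence $g$ is the identity in $PGL_2(\mathbb{F}_{q})$, which proves injectivity. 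The only remaining care here is that the degenerate cases of item 2 must already have been settled, since they are exactly the points where the naive expression $\tfrac{c+dt}{a+bt}$ degenerates.
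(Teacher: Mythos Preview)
The paper does not supply its own proof of this lemma; it is stated with a citation to \cite{BGHK17} and used as a black box. Your argument is correct and self-contained: the componentwise evaluation of $g_k\bm{c}_k(X)$ via the definition of $g_k$, the homogenization identity $\sum_j\big([Y^{j-1}]P(Y)\big)u^{k-j}v^{j-1}=(au+bv)^{k-i}(cu+dv)^{i-1}$ to obtain $(gh)_k=g_kh_k$ exactly in $GL_k(\mathbb{F}_q)$, and the fixed-point count for M\"obius maps combined with injectivity of the rational normal curve for the kernel computation all go through as written. This is essentially the standard route and is in the same spirit as the proofs in \cite{BGHK17}, so there is nothing to contrast against within the present paper.
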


The action of $PGL_2(\mathbb{F}_{q})$ on $\mathbb{F}_{q} \cup \infty=\{\alpha_1,\alpha_2,\cdots,\alpha_{q+1}=\infty\}$  gives a monomorphism $g \mapsto \Pi(g)$ from $PGL_2(\mathbb{F}_{q})$ to the group of permutation matrices in $GL_{q+1}(\mathbb{F}_{q})$ defined by:
\[ (\alpha_1, \dots,  \alpha_{q+1}) \Pi(g)\triangleq (g \alpha_1, \dots, g \alpha_{q+1}).\]
For any $g= (\bbsm a & b \\ c & d \besm) \in PGL_2(\mathbb{F}_{q})$, define a diagonal matrix $\Delta_k(g)=\text{diag}(\delta_1, \delta_2, \cdots, \delta_{q+1})$ as follows:
\begin{align*}
\delta_i=\begin{cases}
(a+b\alpha_i)^{k-1} &\mbox { if $\alpha_i\neq -\frac{a}{b}, \infty$},\\
(c-d\frac{a}{b})^{k-1} &\mbox{ if  $b \neq 0$, $\alpha_i= -\frac{a}{b}$},\\
b^{k-1}  &\mbox{ if $b \neq 0$, $\alpha_i= \infty$},\\
d^{k-1}  &\mbox{ if $b=0$, $\alpha_i= \infty$}.\\
\end{cases}
\end{align*}
It is easy to see that $\delta_i \in \mathbb{F}_{q}^{*}$.
Let $\textbf{1}$ represent the all $1$ vector. For any $2\leq k\leq q,$ denote the generator matrix of $GRS_k(\mathbb{F}_{q}\cup \infty, \textbf{1})$ by
	\begin{align*}
G_k=[c_k(\alpha)]_{\alpha\in \mathbb{F}_{q}\cup\infty}=\left(
\begin{array}{cccc}
1 &  \cdots & 1 & 0 \\
\alpha_1  & \cdots & \alpha_q & 0 \\
\vdots  & \ddots & \vdots & \vdots \\
\alpha_1^{k-2}  & \cdots & \alpha_q^{k-2} & 0 \\
\alpha_1^{k-1}  & \cdots & \alpha_q^{k-1} & 1 \\
\end{array}
\right).
\end{align*}
\begin{theorem}[\cite{BGHK17}]\label{automorphisms}
For any $g\in PGL_2(\mathbb{F}_{q})$, $g_k$ is an automorphism of $GRS_k(\mathbb{F}_{q}\cup \infty, \textbf{1})$. More precisely, we have
\[
g_kG_k=G_k\Pi(g)\Delta_k(g).
\]

\end{theorem}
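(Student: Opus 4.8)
The plan is to verify the matrix identity $g_k G_k = G_k \Pi(g) \Delta_k(g)$ by comparing both sides column by column, indexing columns by the evaluation points $\alpha \in \mathbb{F}_q \cup \infty$. The column of $G_k$ associated to $\alpha$ is exactly $\bm c_k(\alpha)$, so the $\alpha$-column of the left-hand side is $g_k \bm c_k(\alpha)$. By Part (2) of the preceding Lemma (from \cite{BGHK17}), we know $g_k \bm c_k(\alpha) = \bm c_k(g(\alpha))$ as points of $\mathbb{P}^{k-1}(\mathbb{F}_q)$, that is, the two vectors agree up to a nonzero scalar. The entire content of the theorem beyond that projective statement is to pin down this scalar and show it equals the diagonal entry $\delta_\alpha$ prescribed in the definition of $\Delta_k(g)$. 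So the heart of the argument is an explicit scalar computation, organized according to the four cases in the definition of $\Delta_k(g)$.

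First I would treat the generic case $\alpha \in \mathbb{F}_q$ with $\alpha \neq -\tfrac{a}{b}$ (and $b\neq 0$), or more simply any $\alpha$ with $a+b\alpha \neq 0$. Here $g(\alpha) = \frac{c+d\alpha}{a+b\alpha} \in \mathbb{F}_q$, so $\bm c_k(g(\alpha)) = \big(1, g(\alpha), \dots, g(\alpha)^{k-1}\big)^T$. I would compute $g_k \bm c_k(\alpha)$ directly from Definition \ref{def_gm}: its $i$-th entry is $\sum_{j=1}^{k}(g_k)_{ij}\,\alpha^{j-1}$, which by definition of $g_k$ is the value at $X=\alpha$ of the polynomial $(a+bX)^{k-i}(c+dX)^{i-1}$, namely $(a+b\alpha)^{k-i}(c+d\alpha)^{i-1}$. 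Factoring out $(a+b\alpha)^{k-1}$ gives
\[
(a+b\alpha)^{k-i}(c+d\alpha)^{i-1} = (a+b\alpha)^{k-1}\left(\frac{c+d\alpha}{a+b\alpha}\right)^{i-1} = (a+b\alpha)^{k-1}\, g(\alpha)^{i-1}.
\]
Thus $g_k \bm c_k(\alpha) = (a+b\alpha)^{k-1}\,\bm c_k(g(\alpha))$, identifying the scalar as $\delta_\alpha = (a+b\alpha)^{k-1}$, exactly the first line of the definition. This simultaneously reproves Part (2) of the Lemma in this case and fixes the scalar, so I would present it as the main computation.

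The remaining work is to handle the degenerate points where either $a+b\alpha = 0$ or $\alpha = \infty$, each of which sends a finite point to $\infty$ or vice versa, so that one of $\bm c_k$ has the special form $(0,\dots,0,1)^T$. For $\alpha = -\tfrac{a}{b}$ with $b\neq 0$, we have $g(\alpha)=\infty$, so $\bm c_k(g(\alpha)) = (0,\dots,0,1)^T$; evaluating the polynomial $(a+bX)^{k-i}(c+dX)^{i-1}$ at $X=-\tfrac{a}{b}$ kills every entry except $i=k$ (where the factor $(a+bX)^{k-i}=(a+bX)^0=1$), leaving $(c-d\tfrac{a}{b})^{k-1}$, which matches the second line of $\Delta_k(g)$. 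For $\alpha=\infty$ I would use $\bm c_k(\infty)=(0,\dots,0,1)^T$, so $g_k\bm c_k(\infty)$ is just the last column of $g_k$, whose $i$-th entry is the coefficient of $X^{k-1}$ in $(a+bX)^{k-i}(c+dX)^{i-1}$, namely $b^{k-i}d^{i-1}$; distinguishing $b\neq 0$ (where $g(\infty)=\tfrac{d}{b}\in\mathbb{F}_q$ and factoring out $b^{k-1}$ gives the scalar $b^{k-1}$) from $b=0$ (where $g(\infty)=\infty$ and only the $i=k$ entry $d^{k-1}$ survives) yields the last two lines. I do not expect any deep obstacle here; the only care required is bookkeeping the boundary cases correctly and confirming that the permutation $\Pi(g)$ really sends the $\alpha$-column of $G_k$ to the $g(\alpha)$-column, so that the scalars line up as diagonal entries of $\Delta_k(g)$ in the correct order. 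Since the scalar $\delta_\alpha$ is attached to the $g(\alpha)$-slot after permuting, the mildly fiddly point is to track whether $\Delta_k(g)$ multiplies before or after $\Pi(g)$; I would verify the stated ordering $G_k\,\Pi(g)\,\Delta_k(g)$ is consistent by checking one column explicitly against the generic computation above.
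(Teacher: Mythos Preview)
The paper does not supply its own proof of this theorem: it is quoted verbatim as a result of \cite{BGHK17} and used as a black box in the proof of the subsequent theorem. Your column-by-column verification is correct and is precisely the natural argument; the case analysis matches the four clauses in the definition of $\Delta_k(g)$, and the bookkeeping for the ordering $G_k\Pi(g)\Delta_k(g)$ works out since, by the definition of $\Pi(g)$, the $j$-th column of $G_k\Pi(g)$ is $\bm c_k(g\alpha_j)$ and the $j$-th diagonal entry $\delta_j$ of $\Delta_k(g)$ is computed from $\alpha_j$, so the $j$-th column of the right-hand side is $\delta_j\,\bm c_k(g\alpha_j)$, exactly what your generic computation produces for $g_k\bm c_k(\alpha_j)$.
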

Now, we provide our main result of this section as follows.
\begin{theorem}
Suppose $A$ is a subset $A\subset \mathbb{F}_q\cup \infty$ of size $n$ $(n~is~even)$, such that the generalized RS code $GRS_k(A,\bm v)$ is self-dual $(k=\frac{n}{2})$ for some scaling vector $\bm v=(v_1,\cdots,v_n)\in(\mathbb{F}_q^*)^n$. For any $g \in PGL_2(\mathbb{F}_{q}),$ let $A'=gA=\{g(\alpha)\,:\,\alpha\in A\},$ then there exists a scaling vector $\bm v'\in (\mathbb{F}_{q}^*)^{n}$ such that $GRS_k(A',\bm v')$ is self-dual.
\end{theorem}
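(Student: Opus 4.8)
The plan is to exploit the dual nature of the matrix $g_k$: on the one hand $g_k \in GL_k(\mathbb{F}_q)$ acts as an invertible change of basis on the code, and on the other hand, via the homomorphism property $g_k\bm c_k(t)=\bm c_k(g(t))$, it realizes the M\"obius action $g$ on the evaluation points. The crux is to combine these two readings of the single operation ``left-multiply the generator matrix by $g_k$''. Since left-multiplication by an invertible matrix does not change the row space, applying $g_k$ to a generator matrix of $GRS_k(A,\bm v)$ produces a \emph{second} generator matrix of the \emph{same} code; reinterpreting the columns of this new matrix then exhibits that code as one of the form $GRS_k(A',\bm v')$, so self-duality transfers for free.

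First I would fix an enumeration $A=\{\alpha_1,\dots,\alpha_n\}$ and write the generator matrix of $GRS_k(A,\bm v)$ as $M=[v_1\bm c_k(\alpha_1),\dots,v_n\bm c_k(\alpha_n)]$. Because $g_k$ is invertible, as $\bm w$ ranges over $\mathbb{F}_q^k$ so does $\bm w g_k$, whence $\operatorname{rowspace}(g_k M)=\operatorname{rowspace}(M)=GRS_k(A,\bm v)$; in particular the coordinate indexing is left untouched.

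Next I would compute $g_k M$ column by column using the identity $g_k\bm c_k(t)=\bm c_k(g(t))$ in $\mathbb{P}^{k-1}(\mathbb{F}_q)$ from the cited lemma (equivalently, the scalars recorded by $\Delta_k(g)$ in Theorem \ref{automorphisms}). This identity says that for each $t\in\mathbb{F}_q\cup\infty$ there is a scalar $\lambda_t\in\mathbb{F}_q^*$, nonzero since $g_k$ is invertible and $\bm c_k(t)\neq 0$, with $g_k\bm c_k(t)=\lambda_t\bm c_k(g(t))$. Hence the $i$-th column of $g_k M$ is $v_i\lambda_{\alpha_i}\bm c_k(g(\alpha_i))$. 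Setting $\alpha_i'=g(\alpha_i)$ and $v_i'=v_i\lambda_{\alpha_i}$, and noting $\bm v'=(v_1',\dots,v_n')\in(\mathbb{F}_q^*)^n$, I recognize $g_k M$ as a generator matrix of $GRS_k(A',\bm v')$ with $A'=gA$. As $g$ is a bijection of $\mathbb{P}^1(\mathbb{F}_q)\simeq\mathbb{F}_q\cup\infty$, the set $A'$ has exactly $n$ elements, so this is a legitimate evaluation set.

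Combining the two computations gives $GRS_k(A',\bm v')=\operatorname{rowspace}(g_k M)=GRS_k(A,\bm v)$, so the two codes are literally the \emph{same} subspace of $\mathbb{F}_q^n$ and self-duality of $GRS_k(A,\bm v)$ passes verbatim to $GRS_k(A',\bm v')$. I do not anticipate a genuine analytic obstacle: the real content is the conceptual identification above, and the only points needing care are bookkeeping ones, namely confirming $|A'|=n$, that each $\lambda_{\alpha_i}\neq 0$ so $\bm v'\in(\mathbb{F}_q^*)^n$, and that the argument remains valid when some $g(\alpha_i)=\infty$ (handled uniformly by the convention $\bm c_k(\infty)=(0,\dots,0,1)^T$ and the validity of the lemma at $\infty$). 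Should one object that the paper's $GRS_k(A',\bm v')$ is defined only up to the enumeration of $A'$, it suffices to recall that permuting coordinates is an isometry for the Euclidean inner product, so self-duality is independent of the chosen ordering.
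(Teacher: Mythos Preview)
Your proposal is correct and follows essentially the same approach as the paper: both arguments show that left-multiplying a generator matrix of $GRS_k(A,\bm v)$ by $g_k$ preserves the row space while transforming each column $v_i\bm c_k(\alpha_i)$ into $v_i\delta_i\,\bm c_k(g(\alpha_i))$, thereby exhibiting the same code as $GRS_k(A',\bm v')$. The only cosmetic difference is that the paper routes the computation through the full $(q+1)$-column matrix $G_k$ and then restricts to the columns indexed by $A$, whereas you work directly with the $n$-column matrix $M$; the substance is identical.
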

\begin{proof}
 Let $G'=[c_k(g\alpha_1),c_k(g\alpha_2),\cdots, c_k(g\alpha_{q+1})].$ By the definition of $\Pi(g)$, we have
 \[G'=G_{k}\Pi(g).\]
By Theorem~\ref{automorphisms}, there exists $g_k\in GL_k(\mathbb{F}_{q})$ such that
\begin{equation}\label{5}
  g_kG_k=G'\Delta_k(g).
\end{equation}
Since $g_k$ is an automorphism of $GRS_k(\mathbb{F}_{q}\cup \infty, \textbf{1})$, $g_kG_k$ is a new generator matrix of $GRS_k(\mathbb{F}_{q}\cup \infty,\textbf{1})$.

Let $\mathfrak{A}$ be the subset of $\{1,2,\cdots, q+1\}$ such that $i\in \mathfrak{A}$ if and only if $\alpha_i\in A.$ For any $k \times (q+1)$-matrix $M$, we denote $\mbox{Columns}(M, \mathfrak{A})$ the sub-matrix of $M$ whose columns are the columns of $M$ indexed by $\mathfrak{A}$. Now consider the restriction of both sides of Eq. \eqref{5} to columns indexed by $\mathfrak{A}$. We have
\[\mbox{Columns}(g_kG_k, \mathfrak{A})=\mbox{Columns}(G'\Delta_k(g), \mathfrak{A}).\]
Hence
	\[
	\mbox{Columns}(g_kG_k, \mathfrak{A})\mbox{diag}\{v_1,\cdots, v_{n}\}=\mbox{Columns}(G'\Delta_k(g), \mathfrak{A})\mbox{diag}\{v_1,\cdots, v_{n}\}.
	\]
Note that  $\mbox{Columns}(g_kG_k, \mathfrak{A})\mbox{diag}\{v_1,\cdots, v_{n}\}$ is a generator matrix of $GRS_k(A,\bm v)$. It is easy to verify that $\mbox{Columns}(G'\Delta_k(g), \mathfrak{A})=\mbox{Columns}(G', \mathfrak{A})\mbox{diag}\{\delta_i\,:\,i\in \mathfrak{A}\}$. Hence
\[\mbox{Columns}(G'\Delta_k(g), \mathfrak{A})\mbox{diag}\{v_1,\cdots, v_{n}\}=\mbox{Columns}(G', \mathfrak{A})\mbox{diag}\{\delta_i\,:\,i\in \mathfrak{A}\}\mbox{diag}\{v_1,\cdots, v_{n}\}.\]
The right hand of the above equation is a generator matrix of the GRS code with the evaluation set $A'=\{g(\alpha_i)\,|\,i \in \mathcal{A}\}$ and scaling vector $v'=\mbox{diag}\{\delta_i\,:\,i\in \mathfrak{S}\}\cdot(v_1,\cdots, v_{n})^{T}$.
	So $GRS_k(A,\bm v)=GRS_k(A',\bm v').$ Hence, $GRS_k(A',\bm v')$ is a self-dual code.
\end{proof}	
\begin{remark}
The above theorem provides various options for the evaluation sets to obtain self-dual GRS codes.
\end{remark}

As a corollary,  all the self-dual extended GRS codes over finite field $\mathbb{F}_{q}$ with length $n< q+1$ can be constructed from GRS codes with the same parameters.
\begin{corollary}
	Suppose $A$ is a subset $A\subset \mathbb{F}_q\cup \infty$  of size $n$ with $\infty\in A$ and $n$ is even, such that the extended generalized RS code $GRS_k(A,\bm v)$ is self-dual $(k=\frac{n}{2})$ for some scaling vector $\bm v=(v_1,\cdots,v_n)\in(\mathbb{F}_q^*)^n$. Then there exist a subset $A'\subset \mathbb{F}_q$ and scaling vector $\bm v'\in (\mathbb{F}_{q}^*)^{n}$ such that $GRS_k(A',\bm v')$ is self-dual. In other words, one can choose $A'$ to construct self-dual GRS code instead of $A$ containing the infinity point.
\end{corollary}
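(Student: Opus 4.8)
The plan is to reduce the statement to the preceding theorem by exhibiting an explicit M\"obius transformation that sends the infinity point to a finite point while keeping all other evaluation points finite. Concretely, given the self-dual extended GRS code $GRS_k(A,\bm v)$ with $\infty\in A$, I would like to find $g\in PGL_2(\mathbb{F}_q)$ such that $g(\infty)\in\mathbb{F}_q$ and $g(A)\subset\mathbb{F}_q$, i.e. $g$ maps $A$ entirely into the affine line. Once such a $g$ is found, the previous theorem immediately produces a scaling vector $\bm v'$ for which $GRS_k(A',\bm v')$ is self-dual with $A'=g(A)\subset\mathbb{F}_q$, which is exactly the conclusion.

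First I would count obstructions. The map $g$ must send $\infty$ to a finite value, and must send no element of $A$ to $\infty$. Recall from the definition of the M\"obius action that for $g=\left(\begin{smallmatrix} a & b \\ c & d\end{smallmatrix}\right)$ the unique preimage of $\infty$ is $-\tfrac{a}{b}$ (when $b\neq 0$). Thus, to guarantee $g(A)\subset\mathbb{F}_q$, I only need to choose $g$ so that $-\tfrac{a}{b}\notin A$ and $g(\infty)=\tfrac{d}{b}\in\mathbb{F}_q$. The second condition is automatic as soon as $b\neq 0$. So the entire task is to pick $g$ with $b\neq 0$ whose pole $-\tfrac{a}{b}$ avoids the finite part $A\setminus\{\infty\}$. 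Since $|A\setminus\{\infty\}|=n-1$ and the strict inequality $n<q+1$ gives $n-1<q$, there are at most $q-1$ forbidden values in $\mathbb{F}_q$, so at least one element $\gamma\in\mathbb{F}_q$ is available with $\gamma\notin A$.

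Having chosen such a $\gamma$, I would take the simplest transformation with pole at $\gamma$, for example $g=\left(\begin{smallmatrix} -\gamma & 1 \\ 0 & 1\end{smallmatrix}\right)$, which acts by $g(t)=\tfrac{0+t}{-\gamma+t}=\tfrac{t}{t-\gamma}$; this has $b=1\neq 0$, sends $\gamma\mapsto\infty$, sends $\infty\mapsto\tfrac{d}{b}=1\in\mathbb{F}_q$, and is invertible since its determinant is $-\gamma\cdot 1-1\cdot 0=-\gamma$ (one should in fact pick the matrix with nonzero determinant; an equally safe choice is $\left(\begin{smallmatrix} -\gamma & 1 \\ 1 & 0\end{smallmatrix}\right)$ with determinant $-1$). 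Because $\gamma\notin A$, no point of $A$ is mapped to $\infty$, and because $\infty$ is sent to a finite point, the image $A'=g(A)$ lies entirely in $\mathbb{F}_q$. Then applying the previous theorem to this $g$ yields $\bm v'\in(\mathbb{F}_q^*)^n$ with $GRS_k(A',\bm v')$ self-dual, completing the proof.

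The only genuinely delicate point is ensuring a valid $\gamma$ exists, and this is precisely where the hypothesis $n<q+1$ is essential: if $n=q+1$ the finite part $A\setminus\{\infty\}$ would be all of $\mathbb{F}_q$ and every candidate pole would land inside $A$, making the reduction impossible. With the strict inequality the pigeonhole count above is immediate, so I expect no real obstacle beyond stating this counting argument carefully and verifying that the chosen matrix indeed lies in $PGL_2(\mathbb{F}_q)$ (nonzero determinant) and has its pole outside $A$. The remainder is a direct invocation of the previous theorem.
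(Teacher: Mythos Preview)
Your proposal is correct and matches the paper's proof essentially verbatim: the paper also picks an element $-a\in\mathbb{F}_q\setminus A$ (using implicitly that $n<q+1$) and applies the preceding theorem with the M\"obius transformation $g=\left(\begin{smallmatrix} a & 1 \\ 1 & 0\end{smallmatrix}\right)$, which is exactly your ``safe'' second choice with $\gamma=-a$. Your write-up is in fact more careful than the paper's, since you make explicit the counting argument for the existence of $\gamma$ and flag the determinant issue with the first candidate matrix.
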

\begin{proof}
Choose any $-a\in \mathbb{F}_q\setminus A$ and let $g= (\bbsm a & 1 \\ 1 & 0 \besm) \in PGL_2(\mathbb{F}_{q})$. Thus $A'=gA=\{\frac{1}{\alpha+a}\,:\,\alpha\in A\}\subset \mathbb{F}_{q}.$ The corollary then follows from Theorem \ref{automorphisms}.
\end{proof}

\section{Conclusion}
In this paper, we investigate the construction of MDS self-dual codes via GRS codes. First, we present four families of self-dual GRS codes via some multiplicative subgroups of finite fields. These MDS self-dual codes are new in the sense that their parameters can not be covered in the literature. Second, we give a systematic approach to obtain self-dual GRS codes from any known self-dual GRS code via a particular family of automorphisms of GRS codes. In particular, we prove that any $q$-ary self-dual extended GRS codes of length $n <q+1$ can be obtained from GRS codes with the same parameters.










\end{document}